\def\subparagraph{\@startsection{paragraph}{5}{2\parindent}{0ex plus 0.1ex minus 0.1ex}%
{0ex}{\normalfont\small\itshape}}%
\def\URL#1#2{\@ifundefined{href}{#2}{\href{#1}{#2}}}
\def\UrlOrds{\do\*\do\-\do\~\do\'\do\"\do\-}%
\g@addto@macro{\UrlBreaks}{\UrlOrds}
\renewenvironment{abstract}
	{\trivlist\item[]\leftskip0pt\par\vskip4pt\noindent
  	\textbf{\abstractname}\mbox{\null}\\}
	{\par\noindent\endtrivlist}
\def\keywords#1{\par\medskip\par\noindent\textbf{Keywords}: #1\par}
\date{} \emergencystretch 8pt
\def\author#1{\gdef\@author{\hskip-\tabcolsep%
	\parbox{\textwidth}{\raggedright\bfseries#1\\[1pc]}}}
\def\address[#1]#2{\g@addto@macro\@author{\\\hskip-\tabcolsep\parbox{\textwidth}{\raggedright%
	\normalsize\normalfont\textsuperscript{#1}#2}}}
\let\addresslink\textsuperscript
\def\correspondence#1{\g@addto@macro\@author{\\\hskip-\tabcolsep\parbox{\textwidth}{\raggedright%
	\vspace*{10pt}\normalsize\normalfont~\\#1~\\[12pt]}}}
\def\email#1{\g@addto@macro\@author{\\\hskip-\tabcolsep\parbox{\textwidth}{\raggedright%
	\normalsize\normalfont Emails: #1}}}
\def\title#1{\gdef\@title{\vspace*{-30pt}%
	\raggedright\textbf{\@journaltitle}~\\%
  \raggedright\bfseries\ifx\@articleType\@empty\vspace*{20pt}\else%
  \vspace*{20pt}\@articleType\vspace*{20pt}\\\fi#1}}
\let\@journaltitle\@empty \def\journaltitle#1{\gdef\@journaltitle{{\normalfont\itshape#1}}}
\let\@articleType\@empty \def\articletype#1{\gdef\@articleType{{\normalfont\itshape#1}}}
\let\@runningHead\@empty \def\RunningHead#1{\gdef\@runningHead{{\normalfont #1}}}
\begin{document}

\title{Approximation Algorithms for the Freeze Tag Problem inside Polygons}

\author{%
		Fatemeh Rajabi-Alni\addresslink{1}, and
  	Alireza Bagheri\addresslink{1}, and 
    Behrouz Minaei-Bidgoli\addresslink{2}}
\address[1]{Computer Engineering Department, Amirkabir University of Technology (Tehran Polytechnic), Tehran, Iran}
\address[2]{Department of Computer Engineering, Iran University of Science and Technology, Tehran, Iran}


\email{f.rajabialni@aut.ac.ir (F. Rajabi-Alni), ar\_bagheri@aut.ac.ir (A. Bagheri), b\_minaei@iust.ac.ir (B. Minaei-Bidgoli)}%


\maketitle

\begin{abstract}
The freeze tag problem (FTP) aims to awaken a swarm of robots with one or more initial awake robots as soon as possible. Each awake robot must touch a sleeping robot to wake it up. Once a robot is awakened, it can assist in awakening other sleeping robots. We study this problem inside a polygonal domain and present approximation algorithms for it.

\keywords{Freeze tag problem; Swarm robotics; Polygonal domain; t-spanner; Geodesic graphs}
\end{abstract}

\section{Introduction}
\label{intro}

\textit{Swarm robotics} (SR) aims at coordinating multiple robots towards cooperatively performing a particular job. SR has various applications for example in target search and tracking \cite{Senanayake,Tang}, medicine \cite{Ahuja}, agricultural mechanization \cite{Albiero}, road freight \cite{Gan}, and autonomous underwater vehicles \cite{Hoeher}. For more discussion, see \cite{Osaba}. Consider the following problem from SR. Given a set of robots denoted by $S=\{s_0,\dots, s_n\}$ with a single initial awake robot $s_0$ and the remaining asleep robots, in the \textit{freeze tag problem} (FTP), we study how to awaken all robots as soon as possible. Note that once an asleep robot has been awakened, it can assist in awakening other asleep robots. Each awake robot should move next to an asleep robot to awaken it.

Observe that $S$ can be modeled as points in some metric space such as vertices of a graph with weighted edges. It is proved that the FTP is NP-hard even on star graphs with an equal number of robots at each leaf, however, a polynomial-time approximation scheme (PTAS) and a greedy approximation algorithm (GA) are given for this case \cite{E.M.Arkin}. It is also proved that the approximation factor of the GA, where each robot arriving at the root awakens the unawakened robot at the shortest edge, has a lower bound of $7/3$, moreover, a $14$-approximation algorithm was proposed for the FTP on star graphs with different numbers of robots at each leaf \cite{E.M.Arkin}.

Some heuristics are studied in \cite{M.O.Sztainberg}. If there exist more than one initial awake robot, we denote the problem by $k$-FTP, where $k$ is the number of initially awake robots. A PTAS is presented for the $2$-FTP in Euclidean space \cite{Moezkarimi}. It is proved that the FTP in the $2$ and $3$ dimensional Euclidean spaces is NP-hard in \cite{Abel} and \cite{Johnson}, respectively. An $O(1)$-approximation algorithm and a PTAS are given for the FTP in Euclidean space in \cite{E.M.Arkin}.

The FTP inside simple polygons was studied in \cite{Lubiw}. Depending on the metric used for the distance between robots inside a polygonal domain $P$, we consider two versions for the FTP: the \textit{geodesic FTP} (GFTP) and \textit{visibility FTP} (VFTP). In the GFTP inside $P$, the distance between every pair of robots $s_i$ and $s_j$ for $0\leq i,j\leq n$ is the \textit{geodesic distance} between $s_i$ and $s_j$, i.e. the length of the shortest path between two robots $s_i$ and $s_j$ inside $P$ denoted by $\pi(s_i,s_j)$ (Figure \ref{fig:1}). For the VFTP, the distance between any two robots $s_i$ and $s_j$ is not symmetric; the distance from $s_i$ to $s_j$, called the \textit{geodesic visibility distance} (GVD) from $s_i$ to $s_j$, is the shortest path that $s_i$ should travel inside $P$ such that $s_j$ can be visible from $s_i$ (Figure \ref{fig:12}). Lubiw and Zeng \cite{Lubiw} proved that the GFTP and VFTP are NP-hard. In this paper, we give approximation algorithms for the GFTP and VFTP.

\begin{figure*}
\vspace{0cm}
\hspace{0cm}
 \includegraphics[width=0.8\textwidth]{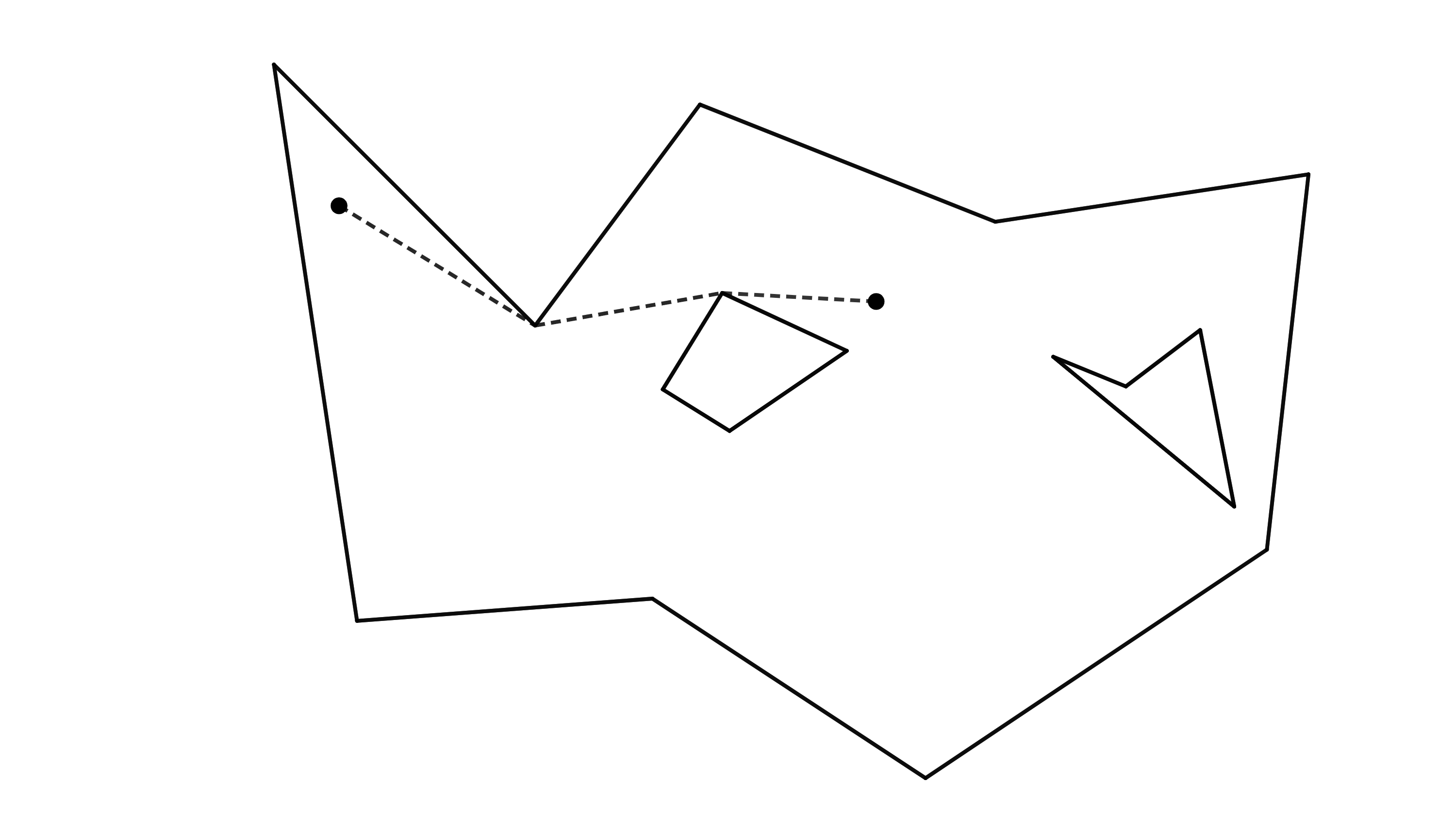}
\vspace{0cm}
\caption{A polygonal domain with two holes; the dashed lines represent the geodesic path between the filled circles.}
\label{fig:1}       
\end{figure*}

\begin{figure*}
\vspace{0cm}
\hspace{0cm}
 \includegraphics[width=0.8\textwidth]{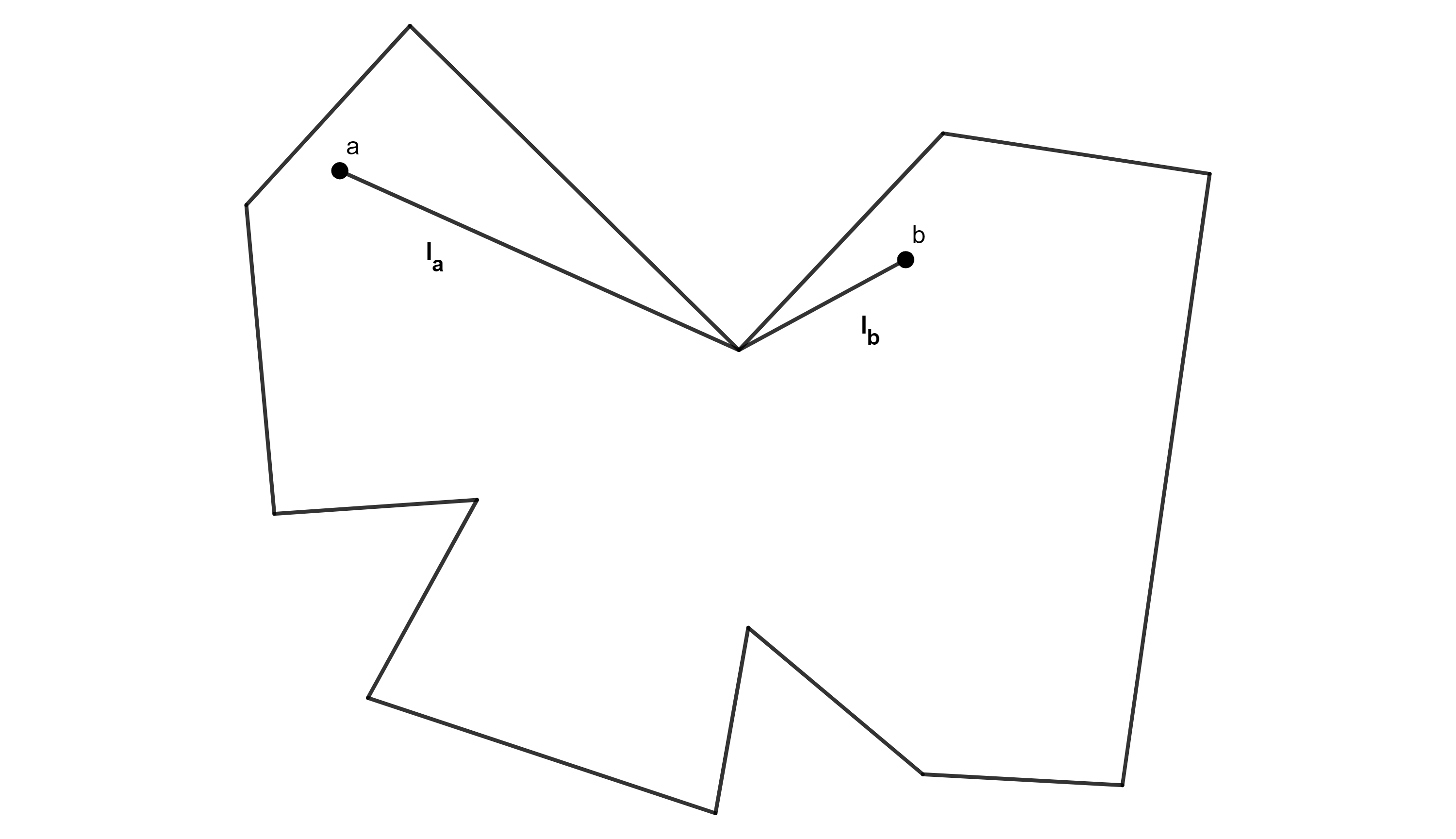}
\vspace{0cm}
\caption{An example for illustration of the GVD; $l_a$ (resp. $l_b$) denotes the GVD from $a$ to $b$ (resp. from $b$ to $a$).}
\label{fig:12}       
\end{figure*}

\section{Our Approximation Algorithms}
In this section, we propose approximation algorithms for two versions of the FTP as follows.
\subsection{Constant Factor Approximation Algorithm}
In this section, we present an $O(1)$-approximation algorithm for the GFTP and VFTP using extra compensatory robots. Let $S$ be a set of $n$ robots (points) inside a polygonal domain $P$. Let $\pi(a,b)$ be the geodesic path between $a$ and $b$ inside $P$ for $a,b \in S$ (Figure \ref{fig:1}). And, let $\|\pi(a,b)\|$ denote the geodesic distance between $a$ and $b$ inside $P$. The \textit{geodesic graph} of $S$ inside $P$ denoted by $GG(S,P)$ is a complete graph with vertex set $S$ such that $e(a,b)=\pi(a,b)$ for $a,b\in S$. Note that $e(a,b)$ denotes the edge between $a$ and $b$.

The \textit{diameter} of $S$ inside $P$, denoted by $diam(S,P)$, is the longest geodesic distance between the robots of $S$ inside $P$, i.e., $diam(S,P)=\max({\|\pi(a,b)\|})_{a,b\in S}$. Let the \textit{makespan} denote the distance traversed from $s_0$ for awakening the last asleep robot. Let $R=\{r_1,r_2,\dots,r_m\}$ denote the set of the reflex vertices of $P$ (if exist). And, let $V=\{v_1,v_2,\dots,v_h\}$ denote the set of the vertices of the polygonal holes inside $P$ (if exist). In each point $v\in R\cup V$, if there does not exist any robot $p\in S$, we locate a new robot called the \textit{Steiner} robot. Let $ST=\{st_1,st_2,\dots,st_{m'}\}$ be the set of the Steiner robots (the robots other than those in $S$) located at the points $R\cup V$ for $m'\leq h+m$.

A \textit{$t$-spanner} of a graph $G=(V',E)$ is a subgraph $H=(V',E')$ of $G$ such that the distance between any two vertices $a,b\in V'$ in $H$ is at most $t$ times the distance between $a$ and $b$ in $G$. Let $S'$ be a set of points in the $d$ dimensional Euclidean space, a \textit{geometric $t$-spanner} of $S'$ is a graph $H'$ with the vertex set $S'$, such that the length of the shortest path between any two points of $S'$ in $H'$ is at most $t$ times the Euclidean distance between them.

\newtheorem{theorem}{Theorem}
\begin{theorem}
\label{theorem1}
Let $S$ be a set of robots inside a polygonal domain $P$ with $m$ reflex vertices and $h$ polygonal hole vertices. There is an $O(1)$-approximation algorithm with the makespan $O(diam(S,P))$ for the GFTP on $S$ using $m'$ Steiner robots, where $m'\leq h+m$.
\end{theorem}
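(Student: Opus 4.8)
The plan is to reduce the geodesic freeze-tag problem inside the polygonal domain to a Euclidean freeze-tag problem, for which an $O(1)$-approximation is already known (cited as existing in the Euclidean case). The key geometric observation is that any geodesic shortest path inside $P$ is a polygonal chain whose interior bend-points are reflex vertices of the outer boundary or vertices of the holes — precisely the points where we have placed Steiner robots. Let me sketch the steps.

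First I would argue that after augmenting $S$ with the Steiner robots at $R \cup V$, the geodesic graph $GG(S \cup ST, P)$ is ``locally Euclidean'' in the following sense: for any two robots, the geodesic path $\pi$ between them passes only through points of $R \cup V$, so each of its straight segments connects two points that now both carry robots. Consequently the geodesic metric restricted to $S \cup ST$ is realized by shortest paths in a graph whose edges are Euclidean segments. I would then show that an ordinary geometric $t$-spanner $H'$ built on the point set $S \cup ST$ (under the Euclidean metric, ignoring $P$) can be ``clipped'' to the polygon: keep only the spanner edges that lie inside $P$, and observe that for the remaining (blocked) pairs, the geodesic path is captured by a sequence of unblocked short edges through the Steiner vertices. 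This yields a sparse geodesic spanner $H$ of $S \cup ST$ whose shortest-path distances approximate $\|\pi(\cdot,\cdot)\|$ within a constant factor.

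Next I would run the known Euclidean $O(1)$-approximation FTP algorithm on this spanner (or directly on the spanner metric), starting from $s_0$, to obtain a wake-up schedule. Because the spanner is a $t$-spanner for the geodesic metric, every move the schedule makes along a spanner edge corresponds to a genuine feasible move inside $P$ of at most $t$ times the geodesic length, so the schedule is realizable and its makespan is at most a constant times the optimum makespan for the relaxed (Steiner-augmented) instance. The final step is to relate the makespan of the augmented instance back to the original: since we only added robots (which can only help, as any awakened Steiner robot is free labor) and the optimal makespan is $\Omega(diam(S,P))$ whenever $n \geq 1$, while the produced schedule wakes every robot within a bounded number of ``hops'' each of length $O(diam(S,P))$, the resulting makespan is $O(diam(S,P))$, matching the claimed bound.

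The main obstacle I expect is the clipping argument in the first two steps: showing rigorously that discarding Euclidean spanner edges which cross $\partial P$ or a hole does not destroy the spanner property for the geodesic metric. One must verify that whenever a direct segment between two robots is blocked, there is a detour through Steiner robots at reflex/hole vertices whose total length is within a constant factor of the geodesic distance — this is where the placement of Steiner robots at \emph{every} reflex and hole vertex is essential, and where a careful case analysis of how geodesics turn at these vertices is needed. A secondary technical point is bounding the makespan rather than the total tour length: the Euclidean FTP guarantee must be invoked in its makespan form, and one must confirm that the constant hidden in the spanner stretch $t$ and in the Euclidean approximation factor compose into a single $O(1)$ factor on $diam(S,P)$.
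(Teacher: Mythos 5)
Your first step matches the paper's key geometric fact (its Observation~1 inside Lemma~1): geodesics in $P$ bend only at reflex vertices of $\partial P$ and at hole vertices, so after placing Steiner robots at $R\cup V$ every geodesic decomposes into straight segments whose endpoints carry robots and are mutually visible. From that point on, however, your route diverges from the paper's, and the divergence is where the gap lies. You propose to build an ordinary \emph{Euclidean} $t$-spanner on $S\cup ST$ (ignoring $P$) and then ``clip'' it by deleting edges that cross $\partial P$ or a hole. This clipping step does not preserve the spanner property, and not only for blocked pairs: even for two robots $a,b$ that \emph{do} see each other inside $P$, the Euclidean spanner only guarantees that \emph{some} path of length at most $t\|a-b\|$ exists, and that path may consist entirely of edges that leave $P$. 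After clipping, such a pair may have no short connecting path at all (possibly no path whatsoever), so the clipped graph need not approximate the geodesic metric within any constant factor. No case analysis of how geodesics turn at reflex vertices can repair this, because the failure is a property of the Euclidean spanner's routing, not of the geodesics. The paper avoids the problem by never leaving the polygon: it invokes a polygon-aware construction (a bounded-degree $6$-spanner of the \emph{visibility graph} of $S\cup R\cup V$, due to van Renssen et al.), and its Lemma~1 then shows, by splitting each geodesic at $R\cup V$ and applying the visibility-spanner guarantee segment by segment, that this graph is automatically a $t$-spanner of the geodesic graph $GG(S,P)$.

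A second, smaller gap: the constant-factor analysis of the wake-up schedule (the CFA of Arkin et al.) requires the spanner to have \emph{bounded degree}, since an awakened robot wakes its spanner neighbors sequentially and the overhead per edge is $(2k-1)$ where $k$ is the maximum degree; the paper's construction has degree at most $7$ and stretch $t=6$, giving the explicit bound $awakedist(s_0,s_i)\leq t(2k-1)\|\pi(s_0,s_i)\|=78\,\|\pi(s_0,s_i)\|=O(diam(S,P))$. Your proposal never establishes a degree bound for the clipped spanner, and your closing makespan argument (``a bounded number of hops each of length $O(diam(S,P))$'') is not how the bound arises --- the number of hops is not constant; rather, the traversed distance telescopes along the spanner path and is charged at $(2k-1)$ times its length. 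To fix the proposal you would replace the clipped Euclidean spanner with a bounded-degree visibility (constrained) spanner of $S\cup R\cup V$ and then run the CFA on it exactly as you describe in your third paragraph, which is precisely the paper's argument.
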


\begin{proof}Let $S'=\{s'_0,s'_1,s'_2,\dots,s'_n\}$ denote a set of robots (points) in the plane. We use the idea of the $O(1)$-approximation algorithm proposed in \cite{E.M.Arkin} for the FTP on $S'$, called the \textit{constant factor approximation algorithm} (CFA) which is described briefly in the following. Note that the distance between any two robots (points) $a,b\in S'$ is the Euclidean distance between $a,b$ denoted by $\|a-b\|$. Firstly, they constructed a degree-bounded $t$-spanner for $S'$ in the plane as follows. They partitioned the plane around each $p\in S'$ into $k$ cones (regions in the plane between two consecutive rays) using rays originating from $p$ at angles $0,{2\pi}/k,2({2\pi}/k),3({2\pi}/k), \dots$.

Assume $u_j(p)\in S'$ denotes the closest robot (point) to the robot (point) $p$ in its $j$th cone. Observe that the robots (points) $u_j(p )$ can be computed in $O(kn\log n)$ total time for all $p\in S'$ and $j$ using standard Voronoi diagrams. Consider the graph $G_K = (S',E_K)$, where there is an edge between each point $p\in S'$ and the nearest neighbors of $p$ in its cones, i.e., the points $u_j(p)$. $G_K$ is called a $\theta$-graph, where $\theta = {2\pi}/k$, and is a $t$-spanner for $k \geq 9$ (and for $k \geq 5$ due to \cite{BOSE2015108}).

Assume that the robots (points) $u_1(p), u_2(p), u_3(p),\dots, u_{k'}(p)$ are in ascending order of their distance from $p$. Thus, $u_{k'}(p)$ is the farthest neighbor of $p$. Once the robot $s'_i\in S'$ is awakened, it starts awakening the robots $u_1(s'_i),u_2(s'_i),\dots,u_{k'}(s'_i)$, respectively, where $u_j(s'_i)$ denotes the $jth$ nearest adjacent vertex of $s'_i$ in $G_k$ for $k'\leq k$. Note that according to the CFA, once $s'_i$ is awakened, it traverses the edge $(s'_i,u_1(s'_i))$ to wake up the robot $u_1(s'_i)$ and travels back to its initial location, then $s'_i$ traverses the edge $(s'_i,u_2(s'_i))$ to wake up the robot $u_2(s'_i)$ and returns to its initial location, and so on.

Observe that if we construct a bounded degree $t$-spanner for a metric, we can give a constant factor approximation algorithm for the FTP in it using the CFA. A geodesic $t$-spanner of $S$ inside $P$ is constructed in \cite{Abam}, but the degrees of the vertices are not bounded by a constant integer $k$. Thus, we first give an algorithm for constructing a degree bounded geodesic $t$-spanner of $S$ inside $P$. We construct a $6$-spanner of the visibility graph of $S\cup R\cup V$ inside $P$ with the degree at most $7$ by the algorithm proposed in \cite{Renssen}.

\newtheorem{lemma}{Lemma}
\begin{lemma}
\label{lem1}
A visibility $t$-spanner of $S\cup R\cup V$ inside $P$ is a $t$-spanner for the geodesic graph of $S$ inside $P$, i.e., $GG(S,P)$.
\end{lemma}

\begin{proof}Let $VT$ be the visibility $t$-spanner of $S\cup R\cup V$ inside $P$ constructed by the algorithm of \cite{Renssen}. Assume that $\pi'(a,b)$ denotes the shortest path between $a,b\in S$ in $VT$.

\newtheorem{observation}{Observation}
\begin{observation}
\label{obser1}
Assume that $\pi(a,b)=a,r'_1,r'_2,\dots,r'_{l},b$ for $a,b\in S$, then we have $r'_j\in R\cup V$ for $1\leq j\leq l$.
\end{observation}


We observe that for every two consecutive vertices of $\pi(a,b)$ denoted by $p$ and $q$, the vertex $p$ is visible from the vertex $q$ (and vice versa). Thus, there exists a path $\pi'(p,q)$ between $p$ and $q$ in $VT$ such that

$$\|\pi'(p,q)\|\leq  t\|\pi(p,q)\|.$$



Therefore, there exists a path $\pi'(a,b)$ between any two robots $a\in S$ and $b\in S$ in $VT$ such that

\begin{equation}
    \nonumber
    \begin{split}
    \|\pi'(a,b)\|\leq &t\|\pi(a,r'_1)\|+t\|\pi(r'_1,r'_2)\|\\
    &+\dots+t\|\pi(r'_{l-1},r'_{l})\|\\
    &+t\|\pi(r'_{l},b)\|.
\end{split}
\end{equation}

And, thus

\begin{equation}
    \nonumber
    \begin{split}
    \|\pi'(a,b)\|\leq &t(\|\pi(a,r'_1)\|+\|\pi(r'_1,r'_2)\|\\
    &+\dots+\|\pi(r'_{l-1},r'_{l})\|\\
    &+\|\pi(r'_{l},b)\|)=t \|\pi(a,b)\|.
\end{split}
\end{equation}\end{proof}


Observe that in the CFA, there exists one robot in every vertex of $G_k$ which starts awakening its adjacent vertices in $G_k$ after it is awakened by another robot. Thus, we use the set of the Steiner robots $ST=\{st_1,st_2,\dots,st_{m'}\}$ with $m'=\|R\cup V\|$ as follows. We apply the CFA on the set of robots $S\cup ST$ with $VT=(S\cup ST,E'')$ as a $t$-spanner. By Lemma \ref{lem1}, there exists a path $\pi'(s_0,s_i)$ in $VT$ such that $$\|\pi'(s_0,s_i)\|\leq t\|\pi(s_0,s_i)\|,$$ for all $s_i\in S$. Consider two consecutive vertices $p,q$ in the path $\pi'(s_0,s_i)$. By the CFA, when the robot $s''_y$ located in $p$ is awakened, it traverses the distance $awakedist(s''_y,s''_z)$ to awaken the robot $s''_z$ located in $q$ for two adjacent robots $s''_y,s''_z\in S \cup ST$ in $VT$.
Let $s''_z=u_j(s''_y)$, i.e., $s''_z$ is the $j$th nearest adjacent vertex of $s'_y$ in $VT$. Note that $\|\pi'(s''_y,u_{j'}(s''_y))\|\leq \|\pi'(s''_y,u_j(s''_y))\|$ for all $j'\leq j$. Thus,
$$awakedist(s''_y,u_j(s''_y))\leq (2j-1)\|\pi'(s''_y,u_j(s''_y))\|.$$
In other words, we have
$$awakedist(s''_y,s''_z)\leq (2j-1)\|\pi'(s''_y,s''_z)\|.$$

Note that we have $j\leq k$, therefore

$$awakedist(s''_y,s''_z)\leq (2k-1)\|\pi'(s''_y,s''_z)\|.$$ Recall that $k$ is the maximum degree of the vertices in $VT$.
Thus, for the distance traversed from $s_0$ for awakening each robot $s_i$, i.e. $awakedist(s_0,s_i)$, we have:
\begin{equation}
    \nonumber
    \begin{split}
    awakedist(s_0,s_i)&\leq (2k-1)\|\pi'(s_0,s_i)\|\\
    &\leq t(2k-1)\|\pi(s_0,s_i)\|.
    \end{split}
    \end{equation}


Note that we have $k=7$ and $t=6$, thus

$$awakedist(s_0,s_i)\leq 78\|\pi(s_0,s_i)\|,$$
for all $s_i\in S$. Therefore, we conclude Theorem \ref{theorem1}.
\end{proof}

As above, we give an $O(1)$-approximation algorithm for the VFTP inside $P$.

\begin{theorem}
\label{theorem2}
For the VFTP on $S\cup ST$ inside $P$, there exists an $O(1)$-approximation algorithm with the makespan $O(diam(S,P))$.
\end{theorem}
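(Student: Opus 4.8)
The plan is to reuse verbatim the construction and analysis of Theorem~\ref{theorem1}, changing only the quantity charged for each awakening. Concretely, I would build the same degree-bounded visibility $6$-spanner $VT$ of $S\cup R\cup V$ (maximum degree $7$) via the algorithm of \cite{Renssen}, place the same Steiner robots $ST$ at the points of $R\cup V$ that are unoccupied, and run the CFA on $S\cup ST$ with $VT$ as the underlying $t$-spanner. The only difference from the geodesic case is that when an awake robot at a vertex $p$ awakens a sleeping robot at an adjacent vertex $q$, it now travels the geodesic visibility distance $gvd(p,q)$ (it stops as soon as $q$ becomes visible) rather than the full geodesic distance $\|\pi(p,q)\|$.

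The enabling observation is that $gvd(a,b)\le\|\pi(a,b)\|$ for every $a,b\in P$: a robot that walks all the way along $\pi(a,b)$ certainly sees $b$ no later than upon arrival, so the travel required merely to make $b$ visible cannot exceed $\|\pi(a,b)\|$. Hence every awakening step of the VFTP run costs no more than the corresponding step of the GFTP run. First I would re-examine the round-trip bound of the CFA: a newly awakened robot at $p$ wakes its neighbours $u_1(p),u_2(p),\dots$ in ascending order of spanner (geodesic) length, making a round trip for each earlier neighbour and a one-way trip to the current one. Since the ordering is by $\|\pi(p,\cdot)\|$ while the charge is the smaller $gvd(p,\cdot)$, I would bound $gvd(p,u_i(p))\le\|\pi(p,u_i(p))\|\le\|\pi(p,u_j(p))\|$ for $i\le j$, so the telescoping argument survives unchanged and
\[
awakedist(p,u_j(p))\le(2j-1)\,\|\pi(p,u_j(p))\|.
\]
Summing along the spanner path $\pi'(s_0,s_i)$ guaranteed by Lemma~\ref{lem1} and using $j\le k=7$, $t=6$, I would obtain $awakedist(s_0,s_i)\le (2k-1)t\,\|\pi(s_0,s_i)\|\le 78\,\|\pi(s_0,s_i)\|$, so the makespan is $O(diam(S,P))$ exactly as in Theorem~\ref{theorem1}.

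To upgrade this absolute bound to an $O(1)$-approximation I would supply a matching lower bound $\Omega(diam(S,P))$ for the optimal VFTP makespan and combine it with the displayed upper bound. The hard part will be precisely this lower bound: because $gvd$ is asymmetric and can be strictly smaller than the geodesic distance, the naive argument used for the GFTP (the farthest robot from $s_0$ already forces $\Omega(diam)$ of travel along any awakening chain) does not transfer directly, since a far robot may become visible after only a short detour. I would therefore try to identify a robot whose geodesic visibility distance from $s_0$ along every awakening chain is $\Omega(diam(S,P))$, arguing that any chain reaching it must still cross the obstacles of $P$, and then charge the accumulated $gvd$ values along the chain from below by a constant fraction of the geodesic distance. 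This step, reconciling the asymmetry of the visibility metric with the geodesic diameter, is where I expect the real work to lie; everything else is a direct transcription of the proof of Theorem~\ref{theorem1}.
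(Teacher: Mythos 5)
Your construction and your makespan analysis do match the paper's: the same spanner $VT$ from \cite{Renssen}, the same Steiner robots, the CFA, and the telescoping bound $awakedist(s_0,s_i)\leq t(2k-1)\|\pi(s_0,s_i)\|=78\|\pi(s_0,s_i)\|$, which gives the $O(diam(S,P))$ makespan. The genuine gap is exactly the step you flag as ``where the real work will lie'': the $O(1)$-approximation claim is never established, and the route you propose for it --- showing the optimal VFTP makespan is $\Omega(diam(S,P))$ --- cannot succeed. In a convex polygon there are no reflex or hole vertices (so $ST=\emptyset$) and every robot already sees every other robot, so the optimal VFTP makespan is $0$ while $diam(S,P)$ can be arbitrarily large; more generally, visibility chains through the Steiner robots placed at reflex and hole vertices can make the optimum far smaller than the geodesic diameter. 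So no lower bound of the form $\Omega(diam(S,P))$ on the optimum exists, and your upper bound, being charged entirely in the geodesic metric via $gvd(p,q)\leq\|\pi(p,q)\|$, can never be compared to the visibility-based optimum.

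The paper closes this differently, with its Lemma \ref{lem11}: writing $\pi(a,b)=a,r'_1,\dots,r'_l,b$, one has $GVP(a,b)=\pi(a,r'_l)$ and $GVP(b,a)=\pi(b,r'_1)$, and since $a$ and $r'_l$ are both vertices of the visibility spanner, $VT$ contains a path of length at most $t\|\pi(a,r'_l)\|=t\|GVP(a,b)\|$. That is, $VT$ is a $t$-spanner for the \emph{directed geodesic visibility graph} $GVG(S,P)$, not merely for the geodesic graph $GG(S,P)$ of Lemma \ref{lem1}. The algorithm then sends an awakening robot only as far as the visibility point $r''_{l'}$ and back along that same spanner path (this is how the paper handles the asymmetry $GVP(s''_i,u_j(s''_i))\neq GVP(u_j(s''_i),s''_i)$), so every awakening move is charged against $t$ times the corresponding geodesic visibility distance rather than against $\|\pi(\cdot,\cdot)\|$. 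Rerunning the Theorem \ref{theorem1} analysis with these visibility-based charges yields a bound expressed in the same metric in which the VFTP optimum is measured, which is what makes a constant ratio claim possible at all; your version, charged geodesically, proves only the $O(diam(S,P))$ makespan half of Theorem \ref{theorem2}. (The paper's own derivation of the ratio from Lemma \ref{lem11} is admittedly terse, but that lemma is the essential ingredient your proposal is missing.)
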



\begin{proof}The \textit{geodesic visibility path} (GVP) from $a\in S\cup ST$ to $b\in S \cup ST$ (resp. from $b$ to $a$) is the shortest path that $a$ (resp. $b$) travels inside $P$ until $b$ (resp. $a$) can be visible from $a$ (resp. $b$).

Let $GVP(a,b)$ denote the GVP from $a$ to $b$ in $P$ for $a,b\in S\cup ST$. The \textit{geodesic visibility graph} of $S$ inside $P$, denoted by $GVG(S,P)$, is a directed complete graph with the vertex set $S$ such that $e(a,b)=GVP(a,b)$ and $e(b,a)=GVP(b,a)$ for $a,b\in S$.

\begin{lemma}
\label{lem11}
From the visibility $t$-spanner of $S\cup R\cup V$ inside $P$, we get a $t$-spanner for the geodesic visibility graph of $S$ inside $P$, i.e. $GVG(S,P)$.
\end{lemma}

\begin{proof}Assume $\pi(a,b)=a,r'_1,r'_2,\dots,r'_{l},b$.
Observe that we have:
$$GVP(a,b)=\pi(a,r'_{l}),$$ and
$$GVP(b,a)=\pi(b,r'_1).$$
We observe that $VT$ is a $t$-spanner for $GVG(S,P)$. Since, there exist the paths $\pi'(a,r'_{l})$ and $\pi'(b,r'_1)$ in $VT$ such that
$$\|\pi'(a,r'_{l})\|\leq t\|\pi(a,r'_{l})\|= t\|GVP(a,b)\|,$$ and
$$\|\pi'(b,r'_1)\|\leq t\|\pi(b,r'_1)\|= t\|GVP(b,a)\|.$$
\end{proof}

We use the CFA with $VT$ as a $t$-spanner for $GVG(S,P)$ to awaken the robots in $S\cup ST$. Let $\pi(s''_i,u_j(s''_i))=s''_i,r''_1,r''_2,\dots,r''_{l'},u_j(s''_i)$. Observe that after the robot $s''_i \in S\cup ST$ travels the path $\pi'(s''_i,r''_{l'})$ in $VT$ to awaken the robot $u_j(s''_i)$, it returns to its initial location through the same path, i.e. the path $\pi'(s''_i,r''_{l'})$, although $GVP(s''_i,u_j(s''_i))\neq GVP(u_j(s''_i),s''_i)$ for $s''_i\in S \cup ST$. Thus, as Theorem \ref{theorem1}, we can prove Theorem \ref{theorem2}.
\end{proof}
%

\subsection{A PTAS for the GFTP }
\label{PTAS}

In this section, we give an $O(1+\epsilon)$-approximation algorithm, PTAS, for the GFTP on a set of robots (points) $S$ inside a polygonal domain $P$ using the idea of the PTAS presented in \cite{E.M.Arkin}. Let $S'$ be a set of robots (points) in Euclidean space in any fixed dimension. Assume that one robot is located in each point $p'\in S'$.

In \cite{E.M.Arkin}, a PTAS presented for the FTP in Euclidean space as follows. They first partitioned Euclidean space into $m^2$ subspaces called \textit{pixels}, where $m=1/\epsilon$. Then, they selected an arbitrary robot (point) of each pixel as its representative, and found a \textit{pseudo balanced awakening tree} (SBAT) for the representative robots (points) by examining all possible awakening trees on them in $O(2^{m^2 \log m})$ time. They converted the SBAT to an awakening tree for all robots $p\in S'$, using the awakening tree found by the CFA for each pixel (they replaced the representative of each pixel with the awakening tree of the points inside that pixel). Note that a tree with $n$ vertices is pseudo balanced if the length of each path from the root to a leaf of the tree is $O(\log^2 n)$. The steps of their PTAS are as follows:
\begin{itemize}
  \item Divide the unit square containing the robots into $m^2$ pixels, and select a representative robot for each pixel.
  \item Find an SBAT for representative robots.
  \item Convert the above SBAT to an awakening tree for all robots $p\in S'$ by running the CFA on the points of each pixel.
\end{itemize}

Now, we propose a PTAS for the GFTP on $S$ inside $P$ using the idea of the above PTAS. We assume that there exists one robot in each point $p\in S$. We first decompose the polygonal domain $P$ into convex partitions as follows. Let $SQ(P)$ denote the unit square containing $P$. We divide $SQ(P)$ into $m*m$ squares with side $1/m$ for $m=O(1/ \epsilon)$ (the coordinates of the robots have been rescaled such that they lie in $SQ(P)$). We consider the intersection of $P$ and the squares of $SQ(P)$ as the decomposition (pixels) of $P$ (Figure \ref{fig2}). Note that the number of pixels of $P$ may be less than $m*m$. We observe that the diameter of each pixel of $P$ is $O(1/m)$.
Thus, a PTAS for the GFTP can be given as follows:
\begin{itemize}
  \item We decompose $P$ into $m^2$ or fewer pixels (convex sub polygons), and select an arbitrary robot in each pixel (if exists) as its representative.

  \item We find an SBAT for the representative robots.

  \item Convert the SBAT to an awakening tree of all robots. Observe that each pixel is convex, thus we can use the CFA on the points of each pixel to find an awakening tree for the points in that pixel.
\end{itemize}

\begin{theorem}
\label{theorem3}
There is a PTAS for the GFTP in a polygonal domain $P$.
\end{theorem}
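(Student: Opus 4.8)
The plan is to mirror the three-step Euclidean PTAS of \cite{E.M.Arkin}, replacing Euclidean distances by geodesic distances and exploiting the two structural facts already recorded: that each pixel is convex with diameter $O(1/m)=O(\epsilon)$ (a $1/m$ fraction of $SQ(P)$), and that inside a convex pixel the geodesic distance coincides with the Euclidean distance, so the plain CFA applies there with no Steiner robots. First I would fix the lower bound that makes additive pixel-level errors affordable: since the optimal schedule must reach the farthest robot, $OPT\ge\max_i\|\pi(s_0,s_i)\|$, and the geodesic triangle inequality gives $\|\pi(a,b)\|\le\|\pi(a,s_0)\|+\|\pi(s_0,b)\|\le 2\,OPT$ for all $a,b\in S$, so that $diam(S,P)\le 2\,OPT$. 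Hence any additive error of order $\epsilon\cdot diam(S,P)$ is at most $2\epsilon\cdot OPT$ and folds into the multiplicative guarantee.

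I would then bound the makespan of the awakening tree produced by the three steps as the sum of an inter-pixel part, governed by the SBAT on the representatives, and an intra-pixel part, governed by the CFA. For the inter-pixel part, I would start from an optimal awakening tree $T^*$ on $S$, contract each pixel to its representative, and argue---following the balancing lemma of \cite{E.M.Arkin}---that there is a \emph{pseudo balanced} awakening tree on the $\le m^2$ representatives whose makespan exceeds that of $T^*$ by at most a $(1+\epsilon)$ factor; moving the endpoints of contracted edges to representatives perturbs each edge length by at most the pixel diameter $O(\epsilon\cdot diam(S,P))$, and the balanced depth $O(\log^2 n)$ is exactly what keeps the accumulated perturbation bounded. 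Since the brute-force search examines every awakening tree on the representatives in $O(2^{m^2\log m})$ time, the SBAT actually output is no worse than this witness.

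For the intra-pixel part, because each pixel is convex of diameter $O(\epsilon\cdot diam(S,P))$, Theorem~\ref{theorem1} (equivalently the Euclidean CFA on a convex region) wakes every robot of a pixel from its representative with makespan $O(\epsilon\cdot diam(S,P))=O(\epsilon\cdot OPT)$. Assembling the two parts, I would bound $awakedist(s_0,s_i)$ for each $s_i$ by the SBAT cost along the root-to-leaf path reaching $s_i$'s pixel plus one terminal intra-pixel CFA cost, giving a total of $(1+\epsilon)\,OPT+O(\epsilon)\,OPT=(1+O(\epsilon))\,OPT$; rescaling $\epsilon$ yields the claimed $(1+\epsilon)$ bound. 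The running time is polynomial in $n$ for fixed $\epsilon$: the geodesic distances among the $O(m^2)$ representatives and the per-pixel CFA are computed in polynomial time, while the representative search is the constant $O(2^{m^2\log m})$ once $m=1/\epsilon$ is fixed.

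The step I expect to be the main obstacle is making the balancing-and-accumulation analysis rigorous in the geodesic metric. Along a root-to-leaf path of depth $O(\log^2 n)$ the per-edge perturbations of size $O(\epsilon\cdot diam(S,P))$ could in principle sum to $O(\epsilon\cdot diam(S,P)\cdot\log^2 n)$, so I would have to verify, as in \cite{E.M.Arkin}, that the pseudo balanced structure confines the accumulated slack to $O(\epsilon\cdot diam(S,P))$ and that contraction to representatives and rebalancing respect the geodesic triangle inequality. A secondary point to pin down is the assertion that every pixel is convex when $P$ carries holes and reflex vertices; where this fails I would fall back to a bound of geodesic diameter $O(\epsilon\cdot diam(S,P))$ per pixel and run the geodesic CFA of Theorem~\ref{theorem1} with Steiner robots placed at the reflex and hole vertices lying inside the pixel.
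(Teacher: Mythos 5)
Your proposal takes essentially the same route as the paper: the identical grid decomposition of $SQ(P)$ into pixels, an exhaustive search for a pseudo balanced awakening tree on the pixel representatives, and a CFA-based expansion of that tree inside each pixel; indeed, the lower bound $diam(S,P)\le 2\,OPT$, the perturbation/rebalancing accounting, and the running-time tally you outline are details the paper leaves entirely implicit. The pixel-convexity worry you raise at the end is legitimate but is not resolved by the paper either---it simply asserts that the pixels are convex with diameter $O(1/m)$---so your attempt is, if anything, more complete than the paper's own proof of Theorem~\ref{theorem3}.
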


Observe that in the above PTAS, for any two adjacent robots $s''_y,s''_z \in S\cup ST$ in $VT$, either $s''_y$ awakens $s''_z$, or $s''_z$ awakens $s''_y$. Thus, we can assume w.l.o.g. that the geodesic visibility graph of $S$ inside $P$, $GVG(S,P)$, is an undirected graph. Therefore, as above, we can prove the following theorem.

\begin{theorem}
\label{theorem4}
There is a PTAS for the VFTP in a polygonal domain $P$.
\end{theorem}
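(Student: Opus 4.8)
The plan is to mirror the PTAS for the GFTP established in Theorem~\ref{theorem3}, replacing the geodesic graph $GG(S,P)$ with the geodesic visibility graph $GVG(S,P)$ throughout and invoking Lemma~\ref{lem11} in place of Lemma~\ref{lem1}. The starting point is the observation stated just above the theorem: in any awakening schedule produced by the CFA on the spanner $VT$, every edge $\{s''_y,s''_z\}$ of $VT$ is traversed in a single direction, since one of its two endpoints awakens the other. Consequently the directedness of $GVG(S,P)$ never actually enters the analysis, and the first thing I would do is record formally that the awakening-tree machinery may be run on an \emph{undirected} copy of $GVG(S,P)$ without loss of generality. This is the step that lets the entire symmetric-distance PTAS apparatus apply verbatim.

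With this reduction in hand, I would carry out the same three steps as in Theorem~\ref{theorem3}. First, decompose $SQ(P)$ into at most $m^2$ pixels of diameter $O(1/m)=O(\epsilon)$, each convex as in the GFTP construction, and pick one representative robot per nonempty pixel. Second, compute an SBAT on the representatives by exhaustively examining all awakening trees, which costs $O(2^{m^2\log m})$ time (polynomial for fixed $\epsilon$), now measuring each edge cost by the geodesic visibility distance $GVP$. Third, convert the SBAT into an awakening tree for all of $S$ by running the CFA inside each pixel; here convexity is exploited once more, because any two robots in a convex pixel are mutually visible, so their geodesic visibility distance vanishes and the within-pixel contribution to the makespan is negligible (at most the $O(\epsilon)$ bound of the geodesic case, in fact $0$ under the visibility metric).

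The approximation analysis then follows the geodesic template. Each robot lies within distance $O(\epsilon)$ of its pixel representative, so replacing robots by representatives perturbs the makespan by a multiplicative $(1+\epsilon)$ factor; the SBAT found on the representatives is (near-)optimal for them; and the conversion step adds only the negligible within-pixel cost bounded above. Combining these estimates with the $t$-spanner stretch guaranteed by Lemma~\ref{lem11} yields a makespan of at most $(1+\epsilon)$ times the optimum, establishing the PTAS for the VFTP.

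The hard part will be the justification of the undirected reduction. Although the observation is intuitively clear, making it rigorous requires checking that the exhaustive SBAT search and its pseudo-balanced guarantee---both originally stated for a symmetric (Euclidean) metric---remain valid when edge costs are the asymmetric visibility distances restricted to the single direction actually used in the schedule. In particular, I would need to confirm that the cost charged to each awakening, namely $GVP(s''_y,s''_z)$ rather than $GVP(s''_z,s''_y)$, is exactly the quantity bounded by Lemma~\ref{lem11}, so that no hidden asymmetry inflates the stretch beyond the factor $t$ used in the geodesic argument.
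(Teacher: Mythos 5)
Your proposal takes essentially the same route as the paper: the paper's entire justification of Theorem~\ref{theorem4} is precisely your opening observation---that in the PTAS each edge of $VT$ is used for awakening in only one direction, so $GVG(S,P)$ may be treated w.l.o.g.\ as undirected---after which it invokes the Theorem~\ref{theorem3} construction (with Lemma~\ref{lem11} playing the role of Lemma~\ref{lem1}). Your extra remarks on within-pixel mutual visibility and on making the asymmetric-cost bookkeeping rigorous elaborate on details the paper leaves implicit, but they do not constitute a different approach.
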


\begin{figure*}
\vspace{0.5cm}
\hspace{1cm}
 \includegraphics[width=0.9\textwidth]{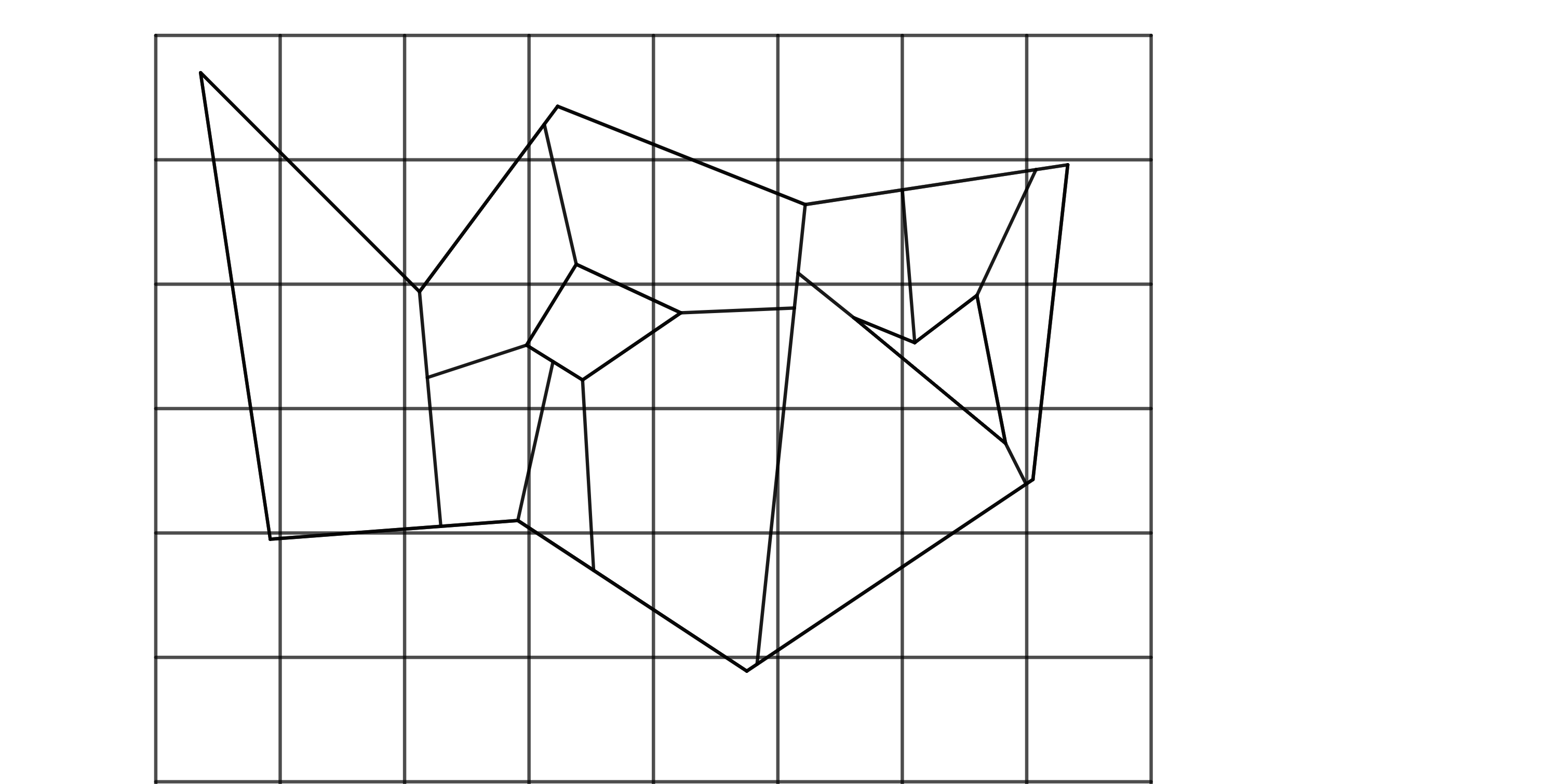}
\vspace{0.4cm}
\caption{An example for decomposing the polygonal domain $P$ into convex partitions (pixels).}
\label{fig2}       
\end{figure*}

\section*{Declarations}

\begin{itemize}
\item[] \textbf{Conflict of interest} The authors have no relevant financial or non-financial interests to disclose.
\item[] \textbf{Data availability} No data were used to support this study.
\item[] \textbf{Authors' contributions} F.Rajabi-Alni designed and proposed the algorithms with support from
A.Bagheri and B.Minaei-Bidgoli. F.Rajabi-Alni and A.Bagheri and B.Minaei-Bidgoli developed the proof and analyzed the
algorithms. F.Rajabi-Alni prepared all ﬁgures. The ﬁrst draft of the manuscript was written by F.Rajabi-Alni. All authors commented on the manuscript. All authors read and approved the ﬁnal manuscript.
\end{itemize}

\FloatBarrier
\bibliographystyle{hindawi_bib_style}
\bibliography{sample6}

\begin{thebibliography}{10}

\bibitem{Senanayake}
M.~Senanayake, I.~Senthooran, J.~C. Barca  et~al.
\newblock ``Search and tracking algorithms for swarms of robots: A survey''.
\newblock {\em Rob. Auton. Syst.}, vol. 75, 422--434, 2016.

\bibitem{Tang}
Q.~Tang, Z.~Xu, F.~Yu, Z.~Zhang  and J.~Zhang.
\newblock ``Dynamic target searching and tracking with swarm robots based on
  stigmergy mechanism''.
\newblock {\em Rob. Auton. Syst.}, vol. 120, 103251, 2019.

\bibitem{Ahuja}
N.~Ahuja, H.~Bikkavilli, Z.~Chen  et~al.
\newblock ``Real-time cellular-level imaging and medical treatment with a swarm
  of wireless multifunctional robots''.
\newblock {\em J. Supercomput.}, vol. 78, 1923–1943, 2022.

\bibitem{Albiero}
D.~Albiero, A.~P. Garcia, C.~K. Umezu  and R.~L. de~Paulo.
\newblock ``Swarm robots in mechanized agricultural operations: A review about
  challenges for research''.
\newblock {\em Comput. Electron. Agric.}, vol. 193, 106608, 2022.

\bibitem{Gan}
M.~Gan, Q.~Qian, D.~Li, Y.~Ai  and X.~Liu.
\newblock ``Capturing the swarm intelligence in truckers: The foundation
  analysis for future swarm robotics in road freight''.
\newblock {\em Swarm Evol. Comput.}, vol. 62, 100845, 2021.

\bibitem{Hoeher}
Peter~Adam Hoeher, Jan Sticklus  and Andrej Harlakin.
\newblock ``Underwater optical wireless communications in swarm robotics: A
  tutorial''.
\newblock {\em IEEE. Commun. Surv. Tutorials}, vol. 23, no. 4, 2630--2659,
  2021.

\bibitem{Osaba}
E.~Osaba, J.~{Del Ser}, A.~Iglesias  and X-S. Yang.
\newblock ``Soft computing for swarm robotics: {N}ew trends and applications''.
\newblock {\em J. Comput. Sci.}, vol. 39, 101049, 2020.

\bibitem{E.M.Arkin}
E.~Arkin, M.~Bender, S.~Fekete, J.~Mitchell  and M.~Skutella.
\newblock ``The freeze-tag problem: How to wake up a swarm of robots''.
\newblock {\em Algorithmica}, vol. 46, no. 2.

\bibitem{M.O.Sztainberg}
Ma.~O. Sztainberg, E.~Arkin, M.~Bender  and J.~Mitchell.
\newblock ``Analysis of heuristics for the freeze-tag problem''.
\newblock In Martti Penttonen  and Erik~Meineche Schmidt, editors, {\em
  Algorithm Theory SWAT}, pages 270--279, Berlin, Heidelberg, 2002. Springer
  Berlin Heidelberg.

\bibitem{Moezkarimi}
Z.~Moezkarimi  and A.~Bagheri.
\newblock ``A {PTAS} for geometric 2-{FTP}''.
\newblock {\em Inf. Process. Lett.}, vol. 114, no. 12, 670--675.

\bibitem{Abel}
Z.~Abel, H.~A. Akitaya  and J.~Yu.
\newblock ``Freeze tag awakening in 2{D} is {NP}-hard''.
\newblock In {\em 27th Annual Fall Workshop on Computational Geometry (FWCG)},
  Stony Brook University, Stony Brook, 2017.

\bibitem{Johnson}
M.~P. Johnson.
\newblock ``Easier hardness for 3{D} freeze-tag''.
\newblock In {\em 27th Annual Fall Workshop on Computational Geometry (FWCG)},
  Stony Brook University, Stony Brook, 2017.

\bibitem{Lubiw}
Y.~Zeng A.~Lubiw.
\newblock ``The visibility freeze-tag problem''.
\newblock In {\em 24th Annual Fall Workshop on Computational Geometry (FWCG)},
  University of Connecticut, 2014.

\bibitem{BOSE2015108}
Prosenjit Bose, Pat Morin, André {van Renssen}  and Sander Verdonschot.
\newblock ``The $\theta_5$-graph is a spanner''.
\newblock {\em Comput. Geom. Theory Appl.}, vol. 48, no. 2, 108--119, 2015.

\bibitem{Abam}
M.~A. Abam.
\newblock ``Spanners for geodesic graphs and visibility graphs''.
\newblock {\em Algorithmica}, vol. 80, 515--529, 2018.

\bibitem{Renssen}
André {van Renssen}  and Gladys Wong.
\newblock ``Bounded-degree spanners in the presence of polygonal obstacle''.
\newblock {\em Theor. Comput. Sci.}, vol. 854, 159--173, 2021.

\end{thebibliography}
\end{document}